\def\cameraReady{} 
\newtheorem{assumption}{Assumption}
\begin{document}
\title{On Fair Ordering and Differential Privacy}
%
%

\ifdefined\cameraReady
\author{Shir Cohen\inst{1} \and
Neel Basu\inst{2} \and
Soumya Basu\inst{2} \and
Lorenzo Alvisi\inst{1}}
\institute{Cornell University \and
Nuveaux Trading}
\else
\author{}{}
\institute{}
\fi

\authorrunning{S. Cohen et al.}
%
%
\maketitle              
\begin{abstract}

In blockchain systems, fair transaction ordering is crucial for a trusted and regulation-compliant economic ecosystem. Unlike traditional State Machine Replication (SMR) systems, which focus solely on liveness and safety, blockchain systems also require a fairness property. This paper examines these properties and aims to eliminate algorithmic bias in transaction ordering services.

We build on the notion of equal opportunity. We characterize transactions in terms of relevant and irrelevant features, requiring that the order be determined solely by the relevant ones. Specifically,  transactions with identical relevant features should have an equal chance of being ordered before one another. We extend this framework to define a property where the greater the distance in relevant features between transactions, the higher the probability of prioritizing one over the other.

We reveal a surprising link between equal opportunity in SMR and Differential Privacy (DP), showing that any DP mechanism can be used to ensure fairness in SMR. This connection not only enhances our understanding of the interplay between privacy and fairness in distributed computing but also opens up new opportunities for designing fair distributed protocols using well-established DP techniques.

\keywords{Blockchain  \and Differential Privacy \and Fair Ordering.}
\end{abstract}

%
%
%


\section{Introduction}

State Machine Replication (SMR)~\cite{lamport1978time} is a technique that enables a set of fault-prone deterministic servers to emulate a single, fault-free deterministic state machine by agreeing upon, and then processing, the same, totally-ordered sequence of client requests. SMR is a foundational approach to building fault-tolerant distributed systems, and  SMR implementations have existed for decades~\cite{schneider1990implementing}.  
Recently, however, SMR's role in supporting blockchain applications has brought it new attention.

This paper focuses on the demands than an  especially important new SMR application -- cryptocurrencies -- places on SMR implementations:  specifically, on the need for these implementation to guarantee some form of fairness in  total order of requests that they produce. Fairness is included in the correctness specification of traditional SMR.  When SMR is used solely for fault tolerance, the specific total order agreed upon is irrelevant; however, when client requests,  or \emph{transactions}, consist of currency transfers between accounts or smart contract operations, the order clearly matters~\cite{zhang2020byzantine}.
Recognizing this, a recent line of work has begun addressing the challenge of fair ordering by proposing different application-specific properties~\cite{cachin2022quick,kelkar2020order,kursawe2020wendy,ramseyer2024brief,zhang2020byzantine}.

Our starting point for this paper is a recent attempt at a  more systematic treatment of fair ordering~\cite{bercow}, which associates with every request a set of {\em features}. Some of these features are {\em relevant} to the criteria ostensibly used to determine the final ordering; others are not. Their goal is to eliminate from the ordering process biases that may favor certain clients or requests unfairly ({\em i.e.}, on the basis of {\em irrelevant} features). Such biases may be systemic ({\em e.g.}, by favoring clients physically closer to the ordering service) or due to malicious parties attempting to manipulate the ordering to their advantage.

Zhang et al.~\cite{bercow} submit that a fair ordering should guarantee {\em impartiality}: intuitively requests with the same relevant features should have an equal chance of being prioritized. They formalize this intuition in a property they call $\epsilon$-Ordering Equality, which, using the absolute time at which a request is issued by a client as its sole relevant feature, guarantees impartiality up to some small parameter $\epsilon$. Building upon this work, our paper makes three contributions.

First, it extends $\epsilon$-Ordering Equality by introducing a refined definiton that ($i$) applies to arbitrary relevant features, instead of being tied solely to a request's time of issue; and ($ii$) gracefully degrades its impartiality guarantee as relevant features drift apart, rather than applying only to the case when relevant features have identical values.

Second, the paper reveals and formalizes a surprising connection between fairness properties and Differential Privacy (DP)~\cite{dwork2006differential}, a powerful framework for protecting individual privacy when sharing statistical information about a dataset. Our key insight is  that the very techniques used to ``hide'' information for privacy can also be leveraged to conceal it from the ordering service, enabling a new method to achieve a high level of fairness in the system.

Finally, we show how enforcing blockchain fairness via our refined notion of $\epsilon$-Ordering Equality can not only reduce front-running and sandwich attacks, as first noted by Zhang et al.~\cite{bercow}, but can also be applied to miners' fees to mitigate other
Miner Extractable Value issues~\cite{daian2020flash}.

We conclude by introducing several open questions, laying the foundation for further research on the subject.

\section{Model}



We consider a message-passing model with an unbounded set of clients. A client is \emph{crashed} if it halts prematurely at some point during an execution. If it deviates from the protocol it is \emph{Byzantine}. We do not impose limitations on the number of faults in the system.
Additionally, we assume the existence of a  reliable ordering server $\mathcal{S}$. In our system model, $\mathcal{S}$ can be implemented using one of several existing solutions for building Byzantine fault tolerant replicated services~\cite{castro1999practical,lamport2001paxos,yin2019hotstuff}

Clients send requests (transactions) to $\mathcal{S}$ over reliable channels; in turn, $\mathcal{S}$ outputs a total order over the requests. We denote by $r\prec r'$ the fact that $r$ appears before $r'$ in the total order. Following~\cite{schneider1990implementing}, we say that a request $r$ is \emph{stable} once no other request with a lower unique identifier can be subsequently delivered to $\mathcal{S}$. It is up to the implementations of $\mathcal{S}$ to devise a favorable stability test for incoming requests; the literature offers  several examples~\cite{schneider1990implementing}.

A client request $r$ is associated with a set of features: these may  include, for example,  information identifying the client that issued $r$, the request's type, or the time that it was received by $\mathcal{S}$. Some of these features are deemed relevant to determining $r$'s place in the total order produced by $\mathcal{S}$; the rest is irrelevant~\cite{bercow}.
We say that $r$ and $r'$ are \emph{adjacent} for a given set of relevant features, denoted as $r \sim r'$, if and only if the values assumed by the relevant features in $r$ and $r'$ are the same.




\section{Fair Ordering}

To achieve fair ordering, one must define what fairness means.
Many notions exist in the literature~\cite{cachin2022quick,kelkar2020order,kursawe2020wendy,ramseyer2024brief,zhang2020byzantine}; we adopt the scheme proposed by~\cite{bercow} because we find that it offers a general structure for reasoning about fairness.
In this framework, fair ordering is based only on the value assumed by the features that the ordering service considers relevant. 
Among requests sharing the same values for these features, equity is sought, meaning they have equal \emph{chances} of being ordered first. 
This requirement is specified via the  $\epsilon$-Ordering Equality property, which~\cite{bercow} defines in terms of a specific relevant feature -- the time at which a request is invoked --  as follows~\cite{bercow}:

\begin{definition}[$\epsilon$-Ordering Equality]
For any two requests $r,r'$ invoked at the same time, 
$| Pr[r\prec r']-Pr[r'\prec r]  | \leq f(\epsilon)$, for some function $f$.


\end{definition}

We generalize this definition in two ways. First, we strengthen it so that it covers relevant features other than just the request's time of invocation. 
Second, we adjust the definition so that the relationship between probabilities is bounded by a multiplicative factor of $e^\epsilon$ instead of an additive error.
This choice ensures that relative differences between probabilities are preserved, particularly when these probabilities are small. Our revised definition is:




\begin{definition}[$\epsilon$-Ordering Equality -- revised]
    For any two requests $r,r'$, if $r\sim r'$, then $Pr[r\prec r']\leq e^{\epsilon} Pr[r'\prec r]  $.

\end{definition}


To achieve $\epsilon$-Ordering Equality, the system must have access to the values assumed by a request's relevant features. But where should these values be measured? At first glance, it would seem that clients are better positioned to do so. For instance, consider the case of a system that aims  to order request according to the time of their  invocation, as in ~\cite{bercow}. The client is, of course, perfectly placed to measure the relevant feature and pass it on to $\mathcal{S}$ as metadata. $\mathcal{S}$ in turn would wait for requests with the same metadata to stabilize, and then assign them a random order.
Unfortunately, a Byzantine client may deliberately misreport the invocation time. 
While this does not change the true value of the relevant feature, it introduces irrelevant information that is indistinguishable from the actual data by $\mathcal{S}$.

The remaining alternative is to measure relevant features at  $\mathcal{S}$. This approach brings its own challenges, however: if $\mathcal{S}$ orders requests according to the time it {\em receives} them, then irrelevant features, such as network delays or the geolocation of the client with respect to $\mathcal{S}$ may alter, or outright reverse, the desired order.


To address this, we focus on how $\mathcal{S}$ can interpret incoming requests. Assume, for simplicity, a request has only one relevant feature -- in practice, multiple features can be aggregated as desired. The {\em score} that $\mathcal{S}$ assigns to a request will be a function of both its relevant and irrelevant features: $\emph{score}(r)=r.\emph{relev}+r.\eta$.







Here, {\em relev} holds the relevant information, and $\eta$ is the noise from irrelevant features that biases the score.
We focus on scenarios where the score is computed by summing  relevant and irrelevant features. A simple example is a score based on when a request is received, where consists of the sum of the invocation time (which is a relevant feature)  and the network delay experienced by the request to reach $\mathcal{S}$ (which we consider an irrelevant feature). Another example of an additive score is one based on transaction fees. In this case, the relevant feature is the fee for block transactions, while any kickbacks exposed by clients are irrelevant and should be ignored for fair ordering (see \S\ref{sec:DP_for_miners_fees} for more on these examples).
A fair ordering mechanism should ideally weigh the relevant feature heavily and conceal $\eta$.
However,  when the noise is too high, it becomes impossible to distinguish it from the relevant signal.
Thus, we define a threshold $\lambda$ for the noise level we can handle. 
Formally:


\begin{assumption}
\label{assumption:bound_on_noise}

There exists a system parameter $\lambda = \max_{r \sim r'} \lVert r.\eta - r'.\eta \rVert$.

\end{assumption}

With the right $\lambda$, $\epsilon$-Ordering Equality can be satisfied.



\section{Background: Differential Privacy}



Some readers may experience a  slight feeling of d\'eja vu by this point, as our definitions resemble those from a completely different realm -- differential privacy (DP). This is by design: the key contribution of this  paper is inlaying down the connection between removing bias for ordering fairness and differential privacy.

Differential privacy is a method to measure the information leakage of an algorithm. It assesses how much the algorithm’s output varies when its input is modified. If the output does not change significantly, the information about the input contained in the output is limited.

The goal of differential privacy mechanisms is to limit the information leakage of an algorithm. Enforcing fairness raises similar issues.
While we acknowledge that it may be impossible to  prevent $\eta$ from influencing the final score, we wish to guarantee that  $\eta$ does not affect the final ordering -- akin to the concern that  motivates differential privacy.

Before we proceed to explore further the connection between fair ordering and differential privacy, we quickly review below DP's
formal definition.

Suppose we have an algorithm $M : \mathcal{X}^n \rightarrow \mathcal{Y}$. Consider any two datasets $X, X' \in \mathcal{X}^n$, which differ in exactly one entry. We call these neighbouring datasets, and sometimes denote this by $X \sim X'$. 

\begin{definition}[$\epsilon$-Differential Privacy]
$M$ is \textit{$\epsilon$ differentially private} if, for all neighbouring $X, X'$, and all $T \subseteq \mathcal{Y}$, we have

\[ \text{Pr} [M(X) \in T] \leq e^{\epsilon} \text{Pr}[M(X') \in T] \]

where the randomness is over the choices made by M.

\end{definition}


Among the many existing implementations of $\epsilon$-differentially private algorithms, we focus on additive noise mechanisms. Commonly used for numerical values, these mechanisms rely on some notion of sensitivity to determine the appropriate amount of noise to add. One example is the $\ell_1$-sensitivity used in the \emph{Laplace mechanism}, a simple technique that ensures privacy by adding noise from the Laplace distribution.

\begin{definition}
Let $f: \mathcal{X}^n \rightarrow \mathbb{R}$.
The $\ell_1$-sensitivity of $f$ is:

\[\Delta^{(f)} = \max_{X \sim X'} \left\Vert f(X) - f(X') \right\Vert\]

    
\end{definition}

At a high level $\Delta^{(f)}$ quantifies how sensitive $f$ is to a single input -- {\em i.e.}, how much the output of the function can change when a single element of the input dataset is modified.













\section{Bridging Fairness and Privacy}
\subsection{$\epsilon$-Ordering Equality and Differential Privacy}
\label{subsec:eps}


To reveal the unexpected link between fairness and differential privacy, we demonstrate that any algorithm $\mathcal{A}$, which orders based on the \emph{score} assigned by the server and guarantees $\epsilon$-differential privacy, also satisfies $\epsilon$-Ordering Equality.



\begin{theorem}
\label{thm:equivalence}

    Let \emph{score} be a function that, for a given request $r$, returns the value $r.\text{relev} + r.\eta$, and let $\mathcal{D}$ be an additive-noise DP mechanism. If an algorithm $\mathcal{A}$ orders requests based on \emph{score}, and $\mathcal{D}$ is applied to $\mathcal{A}$ with sensitivity $\Delta = \lambda$, then $\mathcal{A}$ guarantees $\epsilon$-Ordering Equality.

\end{theorem}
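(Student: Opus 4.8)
The plan is to regard the event ``$r \prec r'$'' as (the indicator of) an output of the composed mechanism $\mathcal{D}\circ\mathcal{A}$, and to compare its probability on the true set of requests, call it $D$, with its probability on a carefully chosen neighbouring set $D^{*}$. Since $\mathcal{D}$ is an additive-noise mechanism applied with sensitivity $\Delta=\lambda$, the total order output by $\mathcal{D}\circ\mathcal{A}$ is $\epsilon$-differentially private, where two request sets count as neighbouring when a single request's score differs by at most $\lambda$; by closure of differential privacy under post-processing, the induced relative order of any fixed pair $r,r'$ inherits this guarantee. It then suffices to build $D^{*}$ so that on $D^{*}$ the $r$-versus-$r'$ comparison is the mirror image of the one on $D$, i.e. $\Pr_{D^{*}}[r\prec r']=\Pr_{D}[r'\prec r]$; feeding $(D,D^{*})$ and this single event into the $\epsilon$-DP inequality then yields $\Pr_{D}[r\prec r']\le e^{\epsilon}\Pr_{D^{*}}[r\prec r']=e^{\epsilon}\Pr_{D}[r'\prec r]$, which is exactly $\epsilon$-Ordering Equality.

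To make this concrete I would first unfold the scores. Since $r\sim r'$, write $v:=r.\mathrm{relev}=r'.\mathrm{relev}$; under $\mathcal{D}$ the noisy scores of $r$ and $r'$ are $v+r.\eta+Z_r$ and $v+r'.\eta+Z_{r'}$, where $Z_r,Z_{r'}$ are i.i.d.\ copies of the additive noise, which is symmetric about $0$. Hence $r\prec r'$ is governed solely by the sign of the random variable $(r.\eta-r'.\eta)+(Z_r-Z_{r'})$, and $Z_r-Z_{r'}$ is itself symmetric about $0$. I would then take $D^{*}$ to be $D$ with the irrelevant feature of $r$ modified so that the deterministic offset $r.\eta-r'.\eta$ is replaced by $-(r.\eta-r'.\eta)$, i.e. so that $r$ and $r'$ swap roles in the comparison. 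Symmetry of $Z_r-Z_{r'}$ then gives $\Pr_{D^{*}}[r\prec r']=\Pr_{D}[r'\prec r]$ directly.

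Finally I would check that $D$ and $D^{*}$ are genuinely neighbouring for the calibrated sensitivity: they differ in exactly one request, and the size of the change made to that request's irrelevant feature is bounded in terms of $\max_{r\sim r'}\lVert r.\eta-r'.\eta\rVert=\lambda=\Delta$ by Assumption~\ref{assumption:bound_on_noise}. This is the step that pins down why $\mathcal{D}$ must be applied with $\Delta=\lambda$: Assumption~\ref{assumption:bound_on_noise} is precisely the guarantee that the mirrored request set stays within the noise budget the mechanism is designed to absorb.

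The part I expect to be delicate is getting the mirror construction to cost only $e^{\epsilon}$ rather than $e^{2\epsilon}$. The tempting shortcut -- passing through the ``fully symmetric'' request set on which $r$ and $r'$ carry \emph{identical} features, where both orderings trivially have probability $\tfrac12$ -- only bounds $\Pr_D[r\prec r']$ and $\Pr_D[r'\prec r]$ \emph{separately} between $\tfrac12 e^{-\epsilon}$ and $\tfrac12 e^{\epsilon}$, and hence loses a factor of $2$ in the exponent. Obtaining the sharp bound requires the single neighbour whose comparison is the exact reflection of the original one, and leans on both the symmetry of the additive DP noise about $0$ and the post-processing closure of differential privacy.
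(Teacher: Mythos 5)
Your overall strategy---reduce the ordering comparison to a single application of the $\epsilon$-DP inequality between the true request set $D$ and one carefully mirrored neighbour $D^{*}$, then invoke post-processing and symmetry of the noise---is genuinely different from the paper's argument, which never perturbs the inputs at all: the paper expands $\Pr[r\prec r']$ as a sum over pairs of noisy scores $(s,s')$ and applies the pointwise density-ratio bound $\Pr[\mathcal{A}(\mathit{score}(r))=s]\le e^{\epsilon}\Pr[\mathcal{A}(\mathit{score}(r'))=s]$ to exchange the roles of $r$ and $r'$ inside the sum, paying the factor $e^{\epsilon}$ on only one of the two factors in each product. Your route is attractive in principle, but it has a concrete gap exactly at the step you flagged as delicate, and the construction you propose does not close it.

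The problem is the distance between $D$ and $D^{*}$. Writing $\delta=r.\eta-r'.\eta$, your mirror replaces $r.\eta$ by $r.\eta^{*}=2r'.\eta-r.\eta$ so that the offset becomes $-\delta$. The perturbation applied to $r$'s score is therefore $\lvert r.\eta^{*}-r.\eta\rvert=2\lvert r.\eta-r'.\eta\rvert$, which Assumption~\ref{assumption:bound_on_noise} bounds only by $2\lambda$, i.e.\ \emph{twice} the sensitivity $\Delta=\lambda$ to which the mechanism is calibrated. An additive-noise mechanism calibrated to sensitivity $\lambda$ guarantees a density ratio of $e^{\epsilon}$ only for shifts of magnitude at most $\lambda$; for a shift of $2\lambda$ it guarantees only $e^{2\epsilon}$. (The alternative mirror---swapping $r.\eta$ and $r'.\eta$---moves two requests, each by at most $\lambda$, and composition again charges $2\epsilon$.) So your final inequality comes out as $\Pr_D[r\prec r']\le e^{2\epsilon}\Pr_D[r'\prec r]$, which is no better than the ``fully symmetric midpoint'' shortcut you correctly reject. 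To recover the single factor $e^{\epsilon}$ you need an argument that never shifts any one score by more than $\lambda$: either the paper's decomposition over pairs of noisy outputs, or a direct extremal computation showing that if two score distributions have densities within a pointwise factor $e^{\epsilon}$ of one another, then $\Pr[S<S']\le e^{\epsilon}\Pr[S'<S]$ for independent draws $S,S'$ from them. Your symmetry observation ($\Pr_{D^{*}}[r\prec r']=\Pr_D[r'\prec r]$ for noise symmetric about $0$) is correct but, note, it also quietly adds a symmetry hypothesis on the noise that the theorem statement does not impose.
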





\begin{proof}

Let $r,r'$ be two requests such that $r\sim r'$. By Assumption~\ref{assumption:bound_on_noise}, there exists a bound $\lambda$ such that $\lambda$ is the $\ell_1$-sensitivity of $score$. Since $\mathcal{A}$ is $\epsilon$-differentially private, for some fixed $s \in \mathcal{A}(score(r))$, we have
\begin{align*}
\text{Pr} [\mathcal{A}(score(r))=s] &\leq e^{\epsilon} \text{Pr}[\mathcal{A}(score(r'))=s] \\
\text{Pr} [\mathcal{A}(score(r'))=s] &\leq e^{\epsilon} \text{Pr}[\mathcal{A}(score(r))=s]
\end{align*}
\noindent Assume \[\text{Pr} [\mathcal{A}(score(r))=s] \leq \text{Pr}[\mathcal{A}(score(r'))=s]\]
We want to prove:
\[
    Pr[r\prec r'] \leq e^{\epsilon} Pr[r'\prec r]
\]
\[
    Pr[r'\prec r] \le e^{\epsilon} Pr[r\prec r']
\]
We have
\begin{align*}
Pr[r\prec r'] &= \sum_{s, s'}(kPr(\mathcal{A}(score(r)) = s)Pr(\mathcal{A}(score(r')) = s')) \\
&\leq \sum_{s, s'}(k Pr(\mathcal{A}(score(r')) = s) e^{\epsilon}Pr(\mathcal{A}(score(r)) = s')) \\
&=e^{\epsilon}\sum_{s, s'}(kPr(\mathcal{A}(score(r')) = s)Pr(\mathcal{A}(score(r)) = s')) \\
&=e^{\epsilon}Pr[r' \prec r]
\end{align*}
and
\begin{align*}
Pr[r' \prec r] &= \sum_{s, s'}(kPr(\mathcal{A}(score(r')) = s)Pr(\mathcal{A}(score(r)) = s')) \\
&\leq \sum_{s, s'}(k e^{\epsilon} Pr(\mathcal{A}(score(r)) = s)Pr(\mathcal{A}(score(r')) = s')) \\
&=e^{\epsilon}\sum_{s, s'}(kPr(\mathcal{A}(score(r)) = s)Pr(\mathcal{A}(score(r')) = s')) \\
&=e^{\epsilon}Pr[r \prec r']
\end{align*} satisfying $\epsilon$-Ordering Equality. If \[\text{Pr} [\mathcal{A}(score(r))=s] \geq \text{Pr}[\mathcal{A}(score(r'))=s]\] we still have $\epsilon$-Ordering Equality by symmetry.

\end{proof}

\subsection{$k\epsilon$-Ordering Equality and Group Privacy}

A notable limitation of $\epsilon$-Ordering Equality is that it only considers requests with identical relevant features.
In practice, however, it is crucial to handle requests with similar but not identical features, as slight variations, {\em e.g.}, due to measurement inaccuracies, could  be exploited to bypass fairness.
For example, a front-running or sandwich attack~\cite{daian2020flash,torres2021frontrunner} is unlikely to be directly addressed by $\epsilon$-Ordering Equality, as an adversary must learn about a request to bypass it, making it impossible for them to be issued at the \emph{exact} same time.


To overcome these limitations, we introduce a more robust and comprehensive version of $\epsilon$-Ordering Equality: $\boldsymbol{k\epsilon}$\textbf{-Ordering Equality}. This new definition retains the original guarantees for requests with identical relevant features, but extends support to cases when requests have similar features. It ensures that the probability of ordering two requests with similar features remains nearly identical, with the difference in probabilities increasing as the features diverge.

To express this flexibility, we introduce  the parameter
 $k = \frac{|\emph{score}(r) - \emph{score}(r')|}{\lambda}$ for any two requests $r,r'$.
We rely on the noise system $\lambda$ to normalize the difference in scores for any two requests. This difference affects directly the difference in probabilities. 

\begin{definition}[$k\epsilon$-Ordering Equality]
    For any two requests $r,r'$ it holds that $Pr[r\prec r']\leq 
 e^{k\epsilon} Pr[r'\prec r]$.

\end{definition}

For two adjacent requests, $k \leq 1$ and $k\epsilon$-Ordering Equality is an equivalent or tighter bound than $\epsilon$-Ordering Equality.
Having introduced an extended notion of fairness, we aim to ensure that our equivalence result remains valid. Indeed, this new definition still corresponds to a DP concept synonymous with $k\epsilon$-differential privacy -- \emph{group privacy}. 
In classic DP, privacy is guaranteed for databases that differ by only one object, often referred to as being at ``distance 1'' from each other. In our context, this corresponds to requests whose relevant features have the same values. Group privacy, however, extends the classic notion to databases that are at ``distance $k$'', {\em i.e.},  they differ by  $k$ distinct objects.
 For us, this maps to requests where the relevant features' values are similar but not identical.
 Using similar proof techniques as in \S\ref{subsec:eps}, we prove that group privacy ensures the preservation of order fairness, resulting in the following theorem:
 

\begin{theorem}
\label{thm:equivalence2}
    Let \emph{score} be a function that, for a given request $r$, returns the value $r.\text{relev} + r.\eta$, and let $\mathcal{D}$ be an additive-noise group privacy mechanism. If an algorithm $\mathcal{A}$ orders requests based on \emph{score}, and $\mathcal{D}$ is applied to $\mathcal{A}$ with sensitivity $\Delta = \lambda$, then $\mathcal{A}$ guarantees $k\epsilon$-Ordering Equality.

\end{theorem}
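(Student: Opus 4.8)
The plan is to reuse the machinery from the proof of Theorem~\ref{thm:equivalence}, replacing the single-step $\epsilon$-DP guarantee with its group-privacy generalization. Fix any two requests $r,r'$ and set $k = |score(r) - score(r')|/\lambda$. Since $\mathcal{D}$ is applied with sensitivity $\Delta = \lambda$, the two values $score(r)$ and $score(r')$ are, in the language of differential privacy, at ``distance $k$'': they can be joined by a chain of intermediate values, each consecutive pair of which differs by at most $\lambda$ and is therefore neighbouring from the mechanism's point of view. Composing the $\epsilon$-DP guarantee of $\mathcal{A}$ along this chain (the standard group-privacy argument) yields, for every output $s$,
\[
\Pr[\mathcal{A}(score(r)) = s] \;\le\; e^{k\epsilon}\,\Pr[\mathcal{A}(score(r')) = s],
\]
together with the symmetric inequality obtained by swapping $r$ and $r'$.

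I would then push these two bounds through exactly the summation argument of \S\ref{subsec:eps}. Expanding $Pr[r\prec r']$ as a sum over pairs of mechanism outputs $(s,s')$ of (the normalization factor times) $\Pr[\mathcal{A}(score(r)) = s]\,\Pr[\mathcal{A}(score(r')) = s']$, one replaces the factor $\Pr[\mathcal{A}(score(r)) = s]$ by $e^{k\epsilon}\Pr[\mathcal{A}(score(r')) = s]$, pulls the constant $e^{k\epsilon}$ outside the sum, and recognizes what remains as $Pr[r'\prec r]$; this gives $Pr[r\prec r'] \le e^{k\epsilon} Pr[r'\prec r]$, and the reverse inequality follows identically from the symmetric bound. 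As in the proof of Theorem~\ref{thm:equivalence}, the two cases according to whether $\Pr[\mathcal{A}(score(r)) = s]$ is at most or at least $\Pr[\mathcal{A}(score(r')) = s]$ are handled by symmetry. When $r\sim r'$, Assumption~\ref{assumption:bound_on_noise} forces $k\le 1$, so the statement specializes to Theorem~\ref{thm:equivalence}.

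The step I expect to be the main obstacle is making ``distance $k$'' precise when $k$ is not an integer, since classical group privacy is phrased for databases differing in an integer number of records. For the additive-noise mechanisms we care about this is not a real difficulty: for the Laplace mechanism, for instance, the ratio of output densities at two points whose scores are $k\lambda$ apart is exactly $e^{k\epsilon}$ with no integrality assumption, so the continuous version of the group-privacy bound applies directly; alternatively one may round $k$ up to $\lceil k\rceil$ at the price of a slightly looser constant. A minor additional point to verify is that $\mathcal{A}$'s comparison and tie-breaking depend on $r$ and $r'$ only through their noised scores, which is what legitimizes the factorization of $Pr[r\prec r']$ over $(s,s')$ carried over from \S\ref{subsec:eps}.
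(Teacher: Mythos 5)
Your proposal is correct and follows exactly the route the paper intends: the paper gives no explicit proof of Theorem~\ref{thm:equivalence2}, stating only that it follows ``using similar proof techniques as in \S\ref{subsec:eps},'' i.e., replace the pointwise $\epsilon$-DP bound with the group-privacy bound $e^{k\epsilon}$ and rerun the summation argument over output pairs $(s,s')$, which is precisely what you do. Your additional care about non-integer $k$ (via the Laplace density ratio or rounding up) and about the factorization of $\Pr[r\prec r']$ addresses details the paper leaves implicit, but does not constitute a different approach.
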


\subsection{Practical Considerations}
The parameter $\epsilon$ plays a similar role in our property as in differential privacy: a smaller $\epsilon$ introduces more noise, enhancing fairness but diminishing the influence of relevant features. Conversely, a larger $\epsilon$ reduces noise, prioritizing the values of relevant features at the expense of fairness. 
Companies like Google and Apple use $\epsilon \approx 2$ in DP~\cite{nist_diff_privacy_2023}, though a lower value might be ideal for fairness. The choice of $\epsilon$ should ultimately depend on the specific use case and desired trade-offs.

The Laplace mechanism is an easy-to-implement $\epsilon$-DP mechanism; however, it samples from an unbounded, double-sided domain.
In certain applications, this can cause liveness issues, making a bounded noise mechanism more suitable~\cite{holohan2018bounded}.


\section{Application for Blockchains}
\label{sec:DP_for_miners_fees}


\paragraph{Miner Fees and Bribery}

The first application examines the role of miner fees in transaction ordering. Here, the ordering server (e.g., the consensus leader) receives both a protocol fee and a per-transaction fee from clients. However, clients can also make side payments to prioritize their transactions, leading to MEV issues~\cite{daian2020flash}, where malicious leaders reorder transactions for profit.

To ensure fairness, we employ Ordering Equality, which deters side payments. The relevant feature here is the client fee, while out-of-band bribes are considered irrelevant as long as they remain bounded. We assume clients’ bribes do not exceed a certain threshold, reflecting their maximum willingness to pay for preferential treatment. Additionally, we assume client fees are significantly higher than any bribes, which can be justified in a two ways: ($i$) clients may prefer to trust the network for transaction inclusion rather than resorting to potentially shady side payments; and ($ii$) the entities paying fees and those offering bribes may differ. For instance, a client might pay a fee to access a decentralized application, while the application itself may have its own incentives to prioritize its own transaction on the global chain. In such cases, it makes sense that the fees would be higher than the bribes.

With these assumptions, we can straightforwardly apply ordering equality. By utilizing the DP mechanisms on the sum of value received by the leader, we ensure Ordering Equality, promoting fairness in transaction processing and reducing the potential for manipulation.

\paragraph{Time of Transaction Issuance.}
The second application considers the time of transaction issuance as the relevant feature, with network delays as noise. The server orders requests based on a score that combines the transaction's issuance time and the time taken to reach the server. This approach, detailed in~\cite{bercow}, effectively mitigates risks like sandwich attacks and front-running. Our method enhances this by more comprehensively addressing the nuances of relevant features differences.
\section{Related Work}

Several works have addressed fairness in SMR, proposing methods to mitigate biases in transaction ordering. Aequitas~\cite{kelkar2020order} introduced receive-order-fairness and $\gamma$-batch order-fairness. Since strict receive-order-fairness requires strong synchrony assumptions, Aequitas focused on $\gamma$-batch order-fairness, which ensures a transaction is ordered no later than another if received by a $\gamma$ fraction of correct replicas beforehand.

Themis~\cite{kelkar2023themis} builds on Aequitas by addressing liveness issue and providing total ordering with batch-order fairness. However, fairness is guaranteed only at the batch level, leaving vulnerabilities such as front-running within individual batches. Themis also requires $4f+1$ resilience, which is sub-optimal for SMR. Our approach is independent of SMR resilience models, utilizing Zhang et al.'s~\cite{bercow} secret random oracle (SRO) for common random variable sampling.

Quick Order Fairness~\cite{cachin2022quick} improves on Themis by offering differential order fairness, maintaining similar guarantees with better algorithmic efficiency.
Recent work~\cite{ramseyer2024brief} extended batch-order fairness to $(\gamma, \delta)$-minimal-batch-order-fairness, capturing the effects of faulty replicas misreporting transaction order.
Unlike their use of social choice theory, we approach fairness through the lens of databases and privacy, providing a more relevant foundation for addressing bias in SMR.

Wendy~\cite{kursawe2020wendy} and Pompe~\cite{zhang2020byzantine} introduce fair separability, which guarantees fair ordering if all honest replicas receive transactions in the same order. 
These approaches provide better guarantees than Aequitas and Themis but remain vulnerable to biases and manipulations by Byzantine clients, particularly due to sensitivity to irrelevant transaction features.

Zhang et al. attempts to address these vulnerabilities but but does not fully succeed, limiting its definition of relevant features to the time of issue. While this feature is the reasonable one to consider when targeting attacks like front-running, it fails when transactions have similar relevant features.
We extend their definition, using decaying probabilities to reorder transactions based on feature differences, specifically where irrelevant features might influence preferences by summing relevant features' values and noise. Addressing other types of noise remains an open question.

Ordering is a critical concern in blockchain systems. While the previous works reflect these concerns, they are not tested on widely-used systems. In contrast, deployed systems focus on mitigating unfairness using economic mechanisms~\cite{babel2024prof,flashbots2021overview,heimbach2023ethereum}, acknowledging that ordering will inevitably have some level of unfairness. 
We hope our work will inspire system designers to adopt our ideas and improve fairness in practical implementations.



\section{Conclusion}

This paper introduces $k\epsilon$-Ordering Equality, a notion of fair ordering that addresses two important limitation in the  framenwork for fair ordering  introduced by Zhang et al.~\cite{bercow}. Further, it establishes  an unexpected connection between this notion of fair ordering and differential privacy. 
We are intrigued by the possibility of a deeper, more fundamental connection between privacy and fairness, which we hope to explore in future work.



\bibliographystyle{plainurl}
\bibliography{main}

\newpage
\begin{appendix}
\end{appendix}

\end{document}